\newtheorem{thm}{Theorem}
\newtheorem{corollary}{Corollary}
\newenvironment{proof}[1][Proof]{\begin{trivlist}
\item[\hskip \labelsep {\bfseries #1}]}{\end{trivlist}}
\newcommand{\rank}{{\rm rank}}
\newcommand{\tr}{{\rm Tr}}
\newcommand{\bzero}{\mathbf{0}}
\newcommand{\bI}{\mathbf{I}}
\newcommand{\bU}{\mathbf{U}}
\newcommand{\bV}{\mathbf{V}}
\newcommand{\bH}{\mathbf{H}}
\newcommand{\bbH}{\bar\bH}
\newcommand{\barh}{\bar{h}}
\newcommand{\bbV}{\bar\bV}
\newcommand{\bbU}{\bar\bU}
\newcommand{\bX}{\mathbf{X}}
\newcommand{\bG}{\mathbf{G}}
\newcommand{\bx}{\mathbf{x}}
\newcommand{\bg}{\mathbf{g}}
\newcommand{\bF}{\mathbf{F}}
\newcommand{\bP}{\mathbf{P}}
\newcommand{\bQ}{\mathbf{Q}}
\newcommand{\bv}{\mathbf{v}}
\begin{document}
\title{On the Degrees of Freedom Achievable Through Interference Alignment in a MIMO Interference Channel$^*$}
%

\author{Meisam Razaviyayn$^\dag$\thanks{$^\dag$\;{\color{black}Department of Electrical and Computer Engineering,} University of Minnesota, Minneapolis, 55455, USA. (e-mail:
 \texttt{\{meisam,luozq\}}@umn.edu)}, Gennady Lyubeznik$^\ddag$\thanks{$^\ddag$ Department of Mathematics, University of Minnesota, Minneapolis, 55455, USA.} and Zhi-Quan Luo$^\dag$
\thanks{$^*$ This work was supported in part by the Army Research Office, grant number W911NF-09-1-0279, and in part by a research gift from Huawei Technologies Inc.}}
\maketitle
\begin{abstract}
Consider a $K$-user flat fading MIMO interference channel where the $k$-th transmitter (or receiver) is equipped with $M_k$ (respectively $N_k$) antennas. If an exponential (in $K$) number of generic channel extensions are used either across time or frequency, {\color{black}Cadambe and Jafar \cite{Jafar1} showed} that the total achievable degrees of freedom (DoF) can be maximized via interference alignment, resulting in a total DoF that grows linearly with $K$ even if $M_k$ and $N_k$ are bounded. In this work we consider the case where no channel extension is allowed, and establish a general condition that must be satisfied by any degrees of freedom tuple $(d_1,d_2,...,d_K)$ achievable through linear interference alignment. For a symmetric system with $M_k=M$, $N_k=N$, $d_k=d$ for all $k$, this condition implies that the total achievable DoF cannot grow linearly with $K$, and is in fact no more than $K(M+N)/(K+1)$. We also show that this bound is tight when the number of antennas at each transceiver {\color{black} is divisible by $d$, the number of data streams per user.}


\end{abstract}

\section{Introduction}
\label{sec:intro}

Consider a multiuser communication system in which a number of transmitters must share common resources such as frequency, time, or space in order to send information to their respective receivers. The mathematical model for this communication scenario is the well-known \textit{interference channel}, which consists of multiple transmitters simultaneously sending messages to their intended receivers while causing interference to each other. 

A central issue in the study of interfering multiuser systems is how to mitigate multiuser interference.
In practice, there are several commonly used methods for dealing with interference. First, we can treat the interference as noise and just focus on extracting the desired signals. This approach is widely used in practice because of its simplicity and ease of implementation, but is known to be non-capacity achieving in general. An alternative technique is channel orthogonalization whereby transmitted signals are chosen to be nonoverlapping either in time, frequency or space, leading to Time Division Multiple Access, Frequency Division Multiple Access, or Space Division Multiple Access respectively. While channel orthogonalization effectively eliminates multiuser interference, it can lead to inefficient use of communication resources and is also generally non-capacity achieving. Another interference management technique is to decode and remove interference. Specifically, when interference is strong relative to desired signals, a user can decode the interference first, then subtract it from the received signal, and finally decode its own message. Unfortunately, none of the aforementioned interference management techniques can achieve the maximum system throughput in general. 

Theoretically, what is the optimal transmit/receive strategy in a MIMO interference channel? The answer is related to the characterization of the capacity region of an interference channel, i.e., determining the set of {\color{black} rate tuples} that can be achieved by the users simultaneously.
In spite of intensive research on this subject over the past three decades, 
the capacity region of interference channels is still unknown (even for small number of users).
The lack of progress to characterize the capacity region of the MIMO interference channel has motivated researchers to derive various approximations of the capacity region. For example, the maximum total degrees of freedom (DoF) corresponds to the first order approximation of sum-rate capacity {\color{black}in the high SNR regime}. Specifically, in a $K$-user interference channel, we define the degrees of freedom region as the following \cite{Jafar1}:
\begin{align}\label{DoF}
\mathcal{D} = \bigg\{&(d_1,d_2,\ldots,d_K) \in \mathbb{R}_+^K\mid \forall  (w_1,w_2,\ldots,w_K) \in \mathbb{R}_+^K,\nonumber\\
&\sum_{k=1}^K w_k d_k \leq  \limsup_{{\rm SNR} \rightarrow \infty} \left[\sup_{\mathbf{R} \in \mathcal{C}} \frac{1}{\log{\rm SNR}} \sum_{k=1}^K w_kR_k\right] \bigg\}, 
\end{align}
where $\mathcal{C}$ is the capacity region and $R_k$ is the rate of user~$k$.
We can further define the total DoF in the system as the following:
\begin{align}
\eta = \max_{(d_1,d_2, \ldots,d_K) \in \mathcal{D}} d_1 + d_2 + \ldots + d_K. \nonumber
\end{align}
Intuitively, the total DoF is the number of independent data streams that we can communicate interference-free in the channel. 

It is well known that for a point-to-point MIMO channel with $M$ antennas at the transmitter and $N$~antennas at the receiver, the total DoF is $\eta = \min \{M,N\}$. Different approaches such as SVD precoder or V-BLAST can be used to achieve this DoF bound. For a 2-user MIMO fading interference channel with user $k$  equipped with $M_k$ transmit antennas and $N_k$ receive antennas ($k=1,2$), Jafar and Fakhereddin  \cite{JafarFakhK2} proved that the maximum total DoF is 
\begin{align}
\eta = \min \left\{M_1 + M_2, N_1 + N_2 , \max\{M_1,N_2\}, \max\{M_2,N_1\}\right\}. \nonumber
\end{align}
This result shows that for the case of $M_1 = M_2 = N_1 = N_2$, the total DoF in the system is the same as the single user case. In other words, we do not gain more DoF by increasing the number of users from one to two. Interestingly, if {\color{black} generic} channel extensions (drawn from a continuous probability distribution) are allowed either across time or frequency, Cadambe and Jafar  \cite{Jafar1} showed that the total DoF is $\eta = {KM}/{2}$ for a $K$-user MIMO interference channel, where $M$ is the number of transmit/receive antennas per user. {\color{black}This  result implies that each user can effectively utilize half of the total system resources in an interference-free manner by {aligning the interference at all receivers}}\footnote{\color{black}The idea of interference alignment was introduced in \cite{waterloo,JafarXfirst, Birk} and the terminology ``interference alignment" was first used in \cite{IATerminology}.}. The principal assumption enabling this surprising result is that the channel extensions are exponentially long in $K$ and are {\color{black} generic} (e.g., drawn from a continuous probability distribution). If channel extensions are restricted to have a polynomial length or are not {\color{black} generic}, the total DoF for a MIMO interference channel is still largely unknown even for the Single-Input-Single-Output (SISO) interference channel. For the 3-user special case, reference \cite{BT} provided a characterization of the total achievable DoF as a function of the diversity.
{\color{black} In the absence of channel extensions, the computational complexity of numerically designing an interference alignment scheme has been shown to be NP-hard \cite{IANPhard} in the number of users}.

The main theoretical investigation pertaining to the current work is \cite{Jafarbound} by Yetis \textit {et\ al.\ }who studied the maximum achievable DoF for a MIMO interference channel without channel extension. In general, linear interference alignment can be described by a set of {\color{black} bilinear equations} which correspond to the zero-forcing conditions at each receiver. For a $K$-user system, there are a total of $K(K-1)$ such coupled quadratic matrix equations whose unknowns are the transmit/receive beamforming matrices to be designed. Moreover, the achievability of a given tuple of DoF corresponds to these quadratic equations having a solution (in the form of beamforming matrices) whose individual matrix ranks are given by the DoFs. One can easily count the number of ``independent unknowns" and the number of scalar equations in this quadratic system defining interference alignment. It is then tempting to conjecture, as was done in \cite{Jafarbound}, that the interference alignment is feasible if and only if the number of equations is no more than the number of unknowns in each subsystem of the quadratic equations. When the latter is true, the authors of \cite{Jafarbound} called the corresponding system \emph{proper}.
However, except for some special cases involving a small number of users and antennas, the investigation of \cite{Jafarbound} was largely inconclusive.

In this paper, we settle the conjecture of \cite{Jafarbound} completely in one direction, and partially in the other. In particular, we consider the case where no channel extension is allowed, and use results from the field theory to establish a general condition that must be satisfied by any DoF tuple achievable through linear interference alignment. This condition shows that the improperness property (in the sense of \cite{Jafarbound}) indeed implies the infeasibility of interference alignment. For the symmetric system with $M_k=M$ and $N_k=N$ for all $k$, this condition implies that the total achievable DoF cannot grow linearly with the number of users, and is in fact no more than $M+N-1$. This is in sharp contrast to the case with independent channel extensions for which the total DoF can grow linearly with the number of users. For the converse direction, we show that if  all users have the same DoF $d$ and the number of antennas $M_k$, $N_k$ are divisible by $d$ for each $k$, then the properness of the quadratic system implies the feasibility of interference alignment for {\color{black} generic choice of channel coefficients (e.g., drawn from a continuous probability distribution)}. If in addition, $M_k=M$ and $N_k=N$ for all $k$ and $M, N$ are divisible by $d$, then our results imply that interference alignment is achievable if and only if $(M+N)\ge d(K+1)$. In the simulation section, we use these established DoF bounds to numerically benchmark the performance of several existing algorithms for interference alignment and sum-rate maximization.


%

\section{System Model}
\label{sec:System_Model}

Consider a MIMO interference network consisting of $K$ transmitter~- receiver pairs, with transmitter $k$ sending $d_k$ independent data streams to receiver $k$. Let $\mathbf{H}_{kj}$ be an $M_j \times N_k$ matrix that represents the channel gain matrix from transmitter~$j$ to receiver~$k$ where $M_j$ and $N_k$ denote the number of antennas at transmitter~$j$ and receiver~$k$, respectively. The received signal at receiver $k$ is given by
\begin{align}
 \mathbf{y}_k = \sum_{j=1}^{K} \mathbf{H}_{kj} \mathbf{x}_j
+ \mathbf{n}_k\nonumber
\end{align}
\normalsize where $\mathbf{x}_j$ is an $M_j \times 1$ random vector that represents the transmitted signal of user $j$ and $\mathbf{n}_k \sim \mathcal{N} (\mathbf{0}, \sigma^2 \mathbf{I})$ is a zero mean additive white Gaussian noise.\\

Throughout this paper, we focus on {\em linear} transmit and receive strategies that can maximize system throughput. In this case, transmitter~$k$ uses a beamforming matrix $\mathbf{V}_k$ in order to send a signal vector $\mathbf{s}_k$ to its intended receiver~$k$. On the other side, receiver~$k$ estimates the transmitted data vector $\mathbf{s}_k$ by using a linear beamforming matrix~$\mathbf{U}_k$, i.e.,
\begin{align}
 \mathbf{x}_k = \mathbf{V}_k \; \mathbf{s}_k, \;\;\;\;\;\;\;
\hat{\mathbf{s}}_k = \mathbf{U}_k^H \mathbf{y}_k\nonumber
\end{align}
where the power of the data vector $\mathbf{s}_k \in \mathbb{R}^{d_k \times 1}$ is normalized such that $E[ \mathbf{s}_k \mathbf{s}_k^H ] = \mathbf{I}$, and $\hat{\mathbf{s}}_k$ is the estimate of $\mathbf{s}_k$ at the $k$-th receiver. The matrices $\mathbf{V}_k \in \mathbb{C}^{M_k \times d_k}$ and $\mathbf{U}_k \in \mathbb{C}^{N_k \times d_k}$ are the beamforming matrices at the $k$-th transmitter and receiver respectively.  
Without channel extension, the linear interference alignment conditions can be described by the following zero-forcing conditions \cite{Jafarbound,IANPhard}
\begin{align}
&\mathbf{U}_k^H\mathbf{H}_{kj}\mathbf{V}_j=\bzero,\;\;\;\;\  \forall\ j\neq k, \label{IA1}\\
&\rank\left(\mathbf{U}_k^H \mathbf{H}_{kk}\mathbf{V}_k\right)=d_k,\quad \forall\ k. \label{IA2}
\end{align}
The first equation guarantees that all the interfering signals
at receiver $k$ lie in the subspace orthogonal to $\mathbf{U}_k$, while
the second one assures that the signal subspace $\mathbf{H}_{kk}
\mathbf{V}_k$ has dimension $d_k$ and is linearly independent of
the interference subspace. Intuitively, as the number of users $K$ increases, the number of constraints on the beamformers $\{\bU_k,\bV_k\}$ increases quadratically in $K$, while the number of design variables in $\{\bU_k,\bV_k\}$ only increases linearly. This suggests the above interference alignment can not have a solution unless $K$ or $d_k$ is small.

The interference alignment conditions \eqref{IA1} and \eqref{IA2} imply that each transmitter $k$ can use a linear transmit/receive strategy to communicate $d_k$ interference-free independent data streams to receiver $k$ (per channel use). In this case, it can be checked that $d_k$ represents the DoF achieved by the $k$-th transmitter/receiver pair in the information theoretic sense of \eqref{DoF}. In other words, the vector $(d_1,d_2,...,d_K)$ in \eqref{IA1} and \eqref{IA2} represents the tuple of DoF achieved by linear interference alignment. Intuitively, the larger the values of $d_1,d_2$,...,$d_K$, the more difficult it is to satisfy the interference alignment conditions \eqref{IA1} and \eqref{IA2}.


\section{Bounding the Total DoF Achievable via Linear Interference Alignment}

Our goal is to study the solvability of the interference alignment problem \eqref{IA1}-\eqref{IA2} and derive a general condition that must be satisfied by any DoF tuple $(d_1,d_2,...,d_K)$ achievable through linear interference alignment for generic choice of channel matrices. We will also provide some conditions under which this upper bound is achievable.

Let us denote the polynomial equations in \eqref{IA2} by the index set
$$\mathcal{J} \triangleq \{(k,j)\mid 1\le k\neq j\le K\}.$$
The following theorem provides an upper bound on the total achievable DoF when no channel extension is allowed.
\begin{thm} \label{thm:bd}
Consider a $K$-user flat fading MIMO interference channel where the channel matrices $\{\mathbf{H}_{ij}\}_{i,j=1}^K$ are generic ({\color{black}e.g., drawn from a continuous probability distribution}). Assume no channel extension is allowed. Then any tuple of degrees of freedom $(d_1,d_2,...,d_K)$ that is achievable through linear interference alignment \eqref{IA1} and \eqref{IA2} must satisfy the following inequalities
\begin{align}
 &\min \{M_k, N_k\} \geq d_k  , \quad \forall\; k, \label{I1} \\
 &\max \{M_k,N_j\} \geq d_k+d_j ,  \quad \forall \;k,j, k\neq j, \label{I2}\\
 &\sum_{k: (k,j)\in \mathcal{I}} (M_k - d_k)d_k + \sum_{j: (k,j)\in \mathcal{I}} (N_j - d_j)d_j \ge \sum_{(k,j)\in\mathcal{I}} d_k d_j,\quad \forall \; \mathcal{I}  \subseteq \mathcal{J}.\label{IAbound}
\end{align}
\end{thm}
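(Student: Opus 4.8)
The plan is to prove the three inequalities separately, with \eqref{I1} and \eqref{I2} being essentially immediate and \eqref{IAbound} requiring the algebraic-geometry heart of the argument. For \eqref{I1}, note that $\mathbf{H}_{kk}\mathbf{V}_k$ is an $N_k\times d_k$ matrix of rank $d_k$, which forces $d_k\le N_k$; and $\mathbf{V}_k$ having rank $d_k$ forces $d_k\le M_k$. For \eqref{I2}, fix $k\neq j$ and look at receiver $k$: the columns of $\mathbf{H}_{kk}\mathbf{V}_k$ (a $d_k$-dimensional space) and the columns of $\mathbf{H}_{kj}\mathbf{V}_j$ (which receiver $k$ must null, so its image lies in the $(N_k-d_k)$-dimensional orthogonal complement of $\mathbf{U}_k$) together lie in $\mathbb{C}^{N_k}$ and — using genericity of $\mathbf{H}_{kj}$ so that $\mathbf{H}_{kj}\mathbf{V}_j$ has full rank $\min\{d_j,N_k\}$ and is in ``general position'' relative to $\mathbf{H}_{kk}\mathbf{V}_k$ — the signal and interference spaces are independent, giving $d_k+\min\{d_j,N_k\}\le N_k$, hence $d_k+d_j\le N_k$ whenever $d_j\le N_k$; the symmetric argument at the transmitter side (or a dimension count using $\mathbf{V}_j$) yields the $\max\{M_k,N_j\}$ form. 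These two are subsumed by the general bound when the index sets are chosen as singletons or small subsets, but are worth recording separately.

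The main inequality \eqref{IAbound} is where I expect the real work. The idea is to view the interference-alignment equations \eqref{IA1} over the pairs $(k,j)\in\mathcal{I}$ as a polynomial system whose variables are the entries of $\{\mathbf{V}_k\}$ and $\{\mathbf{U}_j\}$ appearing in those equations, taken modulo the gauge freedom $\mathbf{V}_k\mapsto \mathbf{V}_k\mathbf{A}_k$, $\mathbf{U}_j\mapsto\mathbf{U}_j\mathbf{B}_j$ (which is why the effective count is $(M_k-d_k)d_k$ for each transmitter and $(N_j-d_j)d_j$ for each receiver — the dimension of the relevant Grassmannian or Stiefel-modulo-$GL$ chart). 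Each equation $\mathbf{U}_k^H\mathbf{H}_{kj}\mathbf{V}_j=\mathbf{0}$ contributes $d_kd_j$ scalar constraints. The plan is to set up an incidence variety / projection argument: consider the variety of tuples $(\{\mathbf{H}_{ij}\},\{\mathbf{V}_k\},\{\mathbf{U}_j\})$ satisfying \eqref{IA1}$|_{\mathcal{I}}$, and project to the channel coordinates; if for generic channels there is a solution, then the fiber dimension plus the image dimension must reconcile with the total, and because each individual equation is bilinear (hence the constraint it imposes on a generic point cuts dimension by exactly one unless it is identically satisfied), one gets that the number of free beamformer parameters must dominate the number of equations. Making ``cuts dimension by exactly one'' rigorous for the whole system at once is the crux; the clean way is a theorem from field theory / algebraic geometry (transcendence-degree argument): if the channel entries are algebraically independent over $\mathbb{Q}$ and the system has a solution, then evaluating, one shows the transcendence degree of the field generated by the channel entries over $\mathbb{Q}$ is at most the number of beamformer parameters minus... — more precisely, one argues that a generic solution exists only if a certain Jacobian has full row rank, equivalently the equation variety dominates the channel space, and a dominant morphism cannot increase dimension, which yields exactly \eqref{IAbound}.

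Concretely, I would: (i) for a fixed $\mathcal{I}$, introduce coordinates on the product of Grassmannians $\prod_{k}\mathrm{Gr}(d_k,M_k)\times\prod_j\mathrm{Gr}(d_j,N_j)$ restricted to indices appearing in $\mathcal{I}$ (dimension $=\sum_k(M_k-d_k)d_k+\sum_j(N_j-d_j)d_j$, matching the left side of \eqref{IAbound}); (ii) form the incidence variety $Z\subseteq (\text{channels})\times(\text{Grassmannians})$ defined by \eqref{IA1}$|_{\mathcal{I}}$, a closed subvariety; (iii) observe solvability for generic channels means the first projection $\pi_1:Z\to(\text{channels})$ is dominant; (iv) bound $\dim Z$ from above by $(\text{Grassmannian dimension})+(\text{channel dimension})-(\text{number of equations})$, which needs that the second projection $\pi_2:Z\to(\text{Grassmannians})$ has fibers of codimension at least $\sum_{(k,j)\in\mathcal{I}}d_kd_j$ inside the channel space — this holds because, for fixed beamformers of the right ranks, the map sending a channel block $\mathbf{H}_{kj}$ to $\mathbf{U}_k^H\mathbf{H}_{kj}\mathbf{V}_j$ is linear and surjective onto $\mathbb{C}^{d_k\times d_j}$; (v) combine: $\dim(\text{channels})=\dim\overline{\pi_1(Z)}\le\dim Z\le \dim(\text{Grassmannians})+\dim(\text{channels})-\sum d_kd_j$, and cancel $\dim(\text{channels})$ to obtain \eqref{IAbound}. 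The single delicate point — and the step I'd guard most carefully — is (iv): the fiber-codimension bound must be uniform, i.e.\ one needs that for \emph{every} point of the Grassmannian product (not just a generic one) the channel-fiber has the asserted codimension; surjectivity of the linear map $\mathbf{H}_{kj}\mapsto\mathbf{U}_k^H\mathbf{H}_{kj}\mathbf{V}_j$ whenever $\mathbf{U}_k,\mathbf{V}_j$ have full column rank $d_k,d_j$ takes care of exactly this, since each block of channel coordinates is constrained independently. (An equivalent, perhaps cleaner, packaging: let the channel entries be a transcendence basis, note any polynomial relation they satisfy has bounded transcendence degree controlled by the beamformer count, and read off \eqref{IAbound}; I expect the authors to take the algebraic-geometry route, so I would present the incidence-variety version.)
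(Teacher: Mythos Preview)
Your proposal is correct, and the incidence-variety dimension count you outline for \eqref{IAbound} is a genuinely different route from the paper's. The paper works in explicit affine charts (writing $\mathbf{V}_k=\mathbf{P}_k^v\bigl[\begin{smallmatrix}\mathbf{I}\\ \bar{\mathbf{V}}_k\end{smallmatrix}\bigr]\mathbf{Q}_k^v$, etc.) and runs a transcendence-degree argument rather than a geometric one: fixing the channel sub-blocks $\bar{\mathbf{H}}_{kj}^{(2)},\bar{\mathbf{H}}_{kj}^{(3)},\bar{\mathbf{H}}_{kj}^{(4)}$, the entries of $\mathbf{F}_{kj}(\bar{\mathbf{U}},\bar{\mathbf{V}}):=-(\bar{\mathbf{U}}_k^H\bar{\mathbf{H}}_{kj}^{(3)}+\bar{\mathbf{H}}_{kj}^{(2)}\bar{\mathbf{V}}_j+\bar{\mathbf{U}}_k^H\bar{\mathbf{H}}_{kj}^{(4)}\bar{\mathbf{V}}_j)$ are polynomials in strictly fewer variables than their total count (under the contrary assumption), hence algebraically dependent; a nonzero $p$ with $p(\{\mathbf{F}_{kj}\})\equiv 0$ is then combined with a first-order expansion identity to force $p(\{\bar{\mathbf{H}}_{kj}^{(1)}\})=0$ at any solution, contradicting genericity of the $\bar{\mathbf{H}}_{kj}^{(1)}$. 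Your approach is the cleaner algebraic-geometry packaging: the single observation that $\mathbf{H}_{kj}\mapsto\mathbf{U}_k^H\mathbf{H}_{kj}\mathbf{V}_j$ is surjective for full-rank $\mathbf{U}_k,\mathbf{V}_j$ gives every $\pi_2$-fiber the same codimension and reduces everything to one dimension inequality. The paper's approach, on the other hand, avoids varieties and dimension theory entirely, needing only ``$m$ polynomials in $n<m$ variables are algebraically dependent'' plus an elementary expansion; it also makes explicit that only the sub-blocks $\bar{\mathbf{H}}_{kj}^{(1)}$ need be generic. (You guessed the authors would take the geometric route; in fact they took the field-theoretic one you mention as the alternative.) One small caution on your sketch of \eqref{I2}: the ``general position relative to $\mathbf{H}_{kk}\mathbf{V}_k$'' phrasing is unsafe since the beamformers depend on all channels; the robust argument, which the paper uses, is simply that when $M_j\le N_k$ a generic $\mathbf{H}_{kj}$ is injective, so $\mathbf{H}_{kj}\mathbf{V}_j$ has rank $d_j$ and must lie in $\ker\mathbf{U}_k^H$, which has dimension $N_k-d_k$.
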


Condition~\eqref{IAbound} in Theorem~\ref{thm:bd} can be used to bound the total DoF achievable in a MIMO interference channel. The following corollary is immediate.
\begin{corollary}\label{co1}
Assume the setting of Theorem~\ref{thm:bd}. Then the following upper bounds hold true.
\begin{enumerate}
\item [(a)] In the case of $d_k=d$ for all $k$, interference alignment is impossible unless
\[
d\le\frac{1}{K(K+1)}\sum_{k=1}^K(M_k+N_k).
\]
\item [(b)] In the case of $M_k+N_k=M+N$, interference alignment requires
\[
\left(\sum_{k=1}^Kd_k\right)^2+\sum_{k=1}^Kd_k^2\le (M+N)\sum_{k=1}^Kd_k
\]
which further implies
\[
\sum_{k=1}^Kd_k < (M+N).
\]
\end{enumerate}
\end{corollary}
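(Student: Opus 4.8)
The plan is to derive both parts directly from inequality~\eqref{IAbound} of Theorem~\ref{thm:bd} by choosing the subset $\mathcal{I}\subseteq\mathcal{J}$ appropriately; in fact the natural choice in both cases is $\mathcal{I}=\mathcal{J}$, i.e. summing over all ordered pairs $(k,j)$ with $k\neq j$. The first thing I would do is rewrite the left-hand side of \eqref{IAbound} with $\mathcal{I}=\mathcal{J}$: for each fixed $k$, the index $k$ appears as the ``receiver'' coordinate in exactly $K-1$ pairs $(k,j)$, and as the ``transmitter'' coordinate in exactly $K-1$ pairs $(j,k)$. Hence
\[
\sum_{k:(k,j)\in\mathcal{J}}(M_k-d_k)d_k+\sum_{j:(k,j)\in\mathcal{J}}(N_j-d_j)d_j
=(K-1)\sum_{k=1}^K\bigl[(M_k-d_k)+(N_k-d_k)\bigr]d_k,
\]
while the right-hand side is
\[
\sum_{(k,j)\in\mathcal{J}}d_kd_j=\Bigl(\sum_{k=1}^Kd_k\Bigr)^2-\sum_{k=1}^Kd_k^2 .
\]
So \eqref{IAbound} with $\mathcal{I}=\mathcal{J}$ reads
\[
(K-1)\sum_{k=1}^K(M_k+N_k-2d_k)d_k\ \ge\ \Bigl(\sum_{k=1}^Kd_k\Bigr)^2-\sum_{k=1}^Kd_k^2 .
\]

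For part~(a), I set $d_k=d$ for all $k$. The left side becomes $(K-1)\sum_k(M_k+N_k-2d)d=(K-1)d\sum_k(M_k+N_k)-2K(K-1)d^2$, and the right side becomes $K^2d^2-Kd^2=K(K-1)d^2$. Rearranging, $(K-1)d\sum_k(M_k+N_k)\ge K(K-1)d^2+2K(K-1)d^2=3K(K-1)d^2$; dividing by $(K-1)d>0$ gives $\sum_k(M_k+N_k)\ge 3Kd$. Hmm — this yields $d\le\frac{1}{3K}\sum_k(M_k+N_k)$, which is weaker than the claimed $d\le\frac{1}{K(K+1)}\sum_k(M_k+N_k)$ unless $K\ge 2$ makes these comparable; so the correct choice of $\mathcal{I}$ must be different. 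The right move is to take $\mathcal{I}$ to be a ``star'': fix one receiver index, or better, take $\mathcal{I}=\{(k,j):k\neq j\}$ but group the computation so that one collects the bound per user. Concretely, I would instead apply \eqref{IAbound} and then \emph{separately} invoke the elementary bound: summing \eqref{IAbound} over $\mathcal{I}=\mathcal{J}$ together with the constraint that each summand $(M_k-d_k)d_k\ge 0$ is not enough; one should exploit that $\sum_{(k,j)\in\mathcal{J}}d_kd_j$ can also be written as $\sum_k d_k(\sum_{j\neq k}d_j)$ and balance term by term. The cleanest route, which I would carry out, is: from $(K-1)\sum_k(M_k+N_k)d_k\ge(\sum_k d_k)^2-\sum_k d_k^2+2(K-1)\sum_k d_k^2=(\sum_k d_k)^2+(2K-3)\sum_k d_k^2$, then in the symmetric case $(K-1)d\sum_k(M_k+N_k)\ge K^2d^2+(2K-3)Kd^2=(K^2+2K^2-3K)d^2=K(3K-3)d^2=3K(K-1)d^2$. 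This again gives $\frac{1}{3K}$, so I must re-examine: the paper's claim $\frac{1}{K(K+1)}$ suggests using a \emph{single} pair or a small $\mathcal{I}$ plus the symmetry of indices, not $\mathcal{I}=\mathcal{J}$. The resolution I would pursue: for part~(a) apply \eqref{IAbound} with $\mathcal{I}=\mathcal{J}$ but drop the ``$-2d_k$'' correction by \emph{not} expanding $(M_k-d_k)d_k$ and instead bounding $M_k-d_k\le M_k$ — no. I will settle this by a careful recount: the genuinely correct identity is that with $\mathcal{I}=\mathcal{J}$ the LHS has coefficient $(K-1)$ on $\sum_k(M_k+N_k-2d_k)d_k$, and since $\sum_k(M_k+N_k)\ge K(K+1)d$ is equivalent to $(K-1)\cdot K(K+1)d^2 - 2K(K-1)d^2\cdot\tfrac{1}{?}$...

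I will abandon that digression and state the plan crisply: \textbf{Step 1.} Apply \eqref{IAbound} with $\mathcal{I}=\mathcal{J}$, using the double-counting identity above, to obtain $(K-1)\sum_k(M_k+N_k-2d_k)d_k\ge(\sum_k d_k)^2-\sum_k d_k^2$. \textbf{Step 2} (part (b)). In the case $M_k+N_k=M+N$ for all $k$: here I would \emph{not} use $\mathcal{I}=\mathcal{J}$ but rather observe that \eqref{IAbound} with $\mathcal{I}=\mathcal{J}$ after regrouping gives, using $M_k+N_k=M+N$, the bound $(\sum_k d_k)^2+\sum_k d_k^2\le (M+N)\sum_k d_k + (\text{nonnegative terms that telescope away})$; more precisely one checks that the claimed inequality $(\sum_k d_k)^2+\sum_k d_k^2\le(M+N)\sum_k d_k$ is exactly \eqref{IAbound} for a suitable $\mathcal{I}$, and since $\sum_k d_k^2\ge 0$ one gets $(\sum_k d_k)^2\le(M+N)\sum_k d_k$, hence $\sum_k d_k\le M+N$, and strictness follows because $\sum_k d_k^2>0$ whenever some $d_k>0$, giving $\sum_k d_k<M+N$. \textbf{Step 3} (part (a), via part (b)-type manipulation). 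With $d_k=d$ and writing $S=\sum_k(M_k+N_k)$, the symmetric specialization of the inequality from Step 1 must be rearranged to isolate $d$; the arithmetic gives $S\ge$ (coefficient)$\cdot Kd$ and I would verify the coefficient equals $K+1$ by choosing $\mathcal{I}$ to consist of the $K(K-1)$ pairs but weighting each user's contribution once as receiver and once as transmitter — equivalently, the correct application uses $\mathcal{I}=\{(k,j):j=\pi(k)\}\cup\mathcal{J}$-type nested sets, and the extremal case $(M+N)=d(K+1)$ from the introduction confirms $\frac{1}{K(K+1)}$ is the right constant.

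The main obstacle, and where I would spend the most care, is \textbf{identifying the correct subset $\mathcal{I}$ (and the correct algebraic regrouping)} so that the combinatorial coefficients come out as $K+1$ in part~(a) rather than the $3K$ that a naive $\mathcal{I}=\mathcal{J}$ computation produces. I expect the right argument is: apply \eqref{IAbound} with $\mathcal{I}=\mathcal{J}$, then add $2\sum_k d_k^2$ to both sides to convert $2d_k$ corrections into a clean form, obtaining $(K-1)\sum_k(M_k+N_k)d_k\ge(\sum d_k)^2+(2K-3)\sum d_k^2$; in the symmetric case this is $S\cdot d\,(K-1)\ge d^2(K^2+(2K-3)K)=d^2K(3K-3)$, i.e. $S\ge 3Kd$ — so the paper's $\frac{1}{K(K+1)}$ constant in part~(a) must come from a \emph{different, tighter} instance of \eqref{IAbound}, namely one where $\mathcal{I}$ is chosen asymmetrically (a single ``column'' $\{(k,j_0):k\neq j_0\}$ together with a single ``row''), and checking that this choice yields coefficient $K+1$ is the crux. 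Once the correct $\mathcal{I}$ is pinned down, both parts follow by elementary algebra and the nonnegativity of $\sum_k d_k^2$ and of each $(M_k-d_k)d_k$, so no deeper tool is needed beyond Theorem~\ref{thm:bd} itself.
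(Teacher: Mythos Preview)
Your plan to apply \eqref{IAbound} with $\mathcal{I}=\mathcal{J}$ is exactly the right one, and is what the paper does (it calls the corollary ``immediate''). The gap is a misreading of the left-hand side of \eqref{IAbound}. The sums $\sum_{k:(k,j)\in\mathcal{I}}$ and $\sum_{j:(k,j)\in\mathcal{I}}$ range over the \emph{distinct} indices $k$ (respectively $j$) that occur as a first (respectively second) coordinate of some pair in $\mathcal{I}$; each such index is counted once, not with multiplicity equal to the number of pairs it belongs to. This is confirmed in the paper's own proof of Theorem~\ref{thm:bd}, where for $\mathcal{I}=\mathcal{J}$ the left-hand side is written simply as $\sum_{k=1}^K(M_k+N_k-2d_k)d_k$ (see \eqref{eq3}). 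Your factor of $(K-1)$ is therefore spurious, and it is precisely this factor that produced the $3K$ coefficient and sent you hunting for exotic subsets.

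With the correct reading, \eqref{IAbound} for $\mathcal{I}=\mathcal{J}$ rearranges to
\[
\sum_{k=1}^K(M_k+N_k)\,d_k \ \ge\ \Bigl(\sum_{k=1}^K d_k\Bigr)^2+\sum_{k=1}^K d_k^2,
\]
and both parts drop out in one line. For~(a), put $d_k=d$: the left side is $d\sum_k(M_k+N_k)$ and the right side is $K^2d^2+Kd^2=K(K+1)d^2$, giving the claimed bound. For~(b), put $M_k+N_k=M+N$: the left side is $(M+N)\sum_k d_k$, which is exactly the first displayed inequality; strictness of $\sum_k d_k<M+N$ then follows from $\sum_k d_k^2>0$. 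No other choice of $\mathcal{I}$ is needed.
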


Part (b) of Corollary~\ref{co1} shows that the total achievable DoF in a MIMO interference channel is bounded by a constant $M+N-1$, regardless of how many users are present in the system. While this bound is an improvement over the single user case which has a maximum DoF of $\min\{M,N\}$, it is significantly weaker than the maximum achievable total DoF for a diagonal frequency selective (or time varying) interference channel. The latter grows linearly with the number of users in the system \cite{Jafar1}.

The rest of this section is devoted to the proof of Theorem~\ref{thm:bd} and its converse. Since we will use several concepts and results from the field theory \cite{Field_book} and algebraic geometry \cite{Matsumura,Atiyah}, we first provide a brief review of the necessary algebraic background.

\subsection{Algebraic Preliminaries}\label{sec:preliminary}

Let $\mathcal{K}, \mathcal{F}$ be two fields such that $\mathcal{K} \subseteq \mathcal{F}$. In this case, we say $\cal F$ is an extension of $\cal K$, denoted by ${\cal F}/{\cal K}$. Let us use $\mathcal{K}[z_1,z_2,\ldots,z_n]$ to denote the ring of polynomials with coefficients drawn from $\cal K$. We say~$\alpha_1, \alpha_2,\ldots, \alpha_n \in \mathcal{F}$ are \textit{algebraically dependent} over~$\mathcal{K}$ if there exists a nonzero polynomial~$f(z_1,z_2,\ldots,z_n) \in \mathcal{K}[z_1,z_2,\ldots,z_n]$ such that
\begin{align}
\label{AlgInd}
f(\alpha_1,\alpha_2,\ldots,\alpha_n) =0.
\end{align}
{\color{black}Otherwise, we say that they are \textit{algebraically independent} over~$\mathcal{K}$. The largest cardinality of an algebraically independent set is called the transcendence degree of ${\cal F}$ over $\cal K$. An element $\alpha\in \cal F$ is said to be algebraic over $\cal K$ if there exists a nonzero polynomial $f\in \mathcal{K}[z]$ such that $f(\alpha)=0$; else, we say $\alpha$ is transcendental over $\cal K$.}
 \\


\noindent
\textbf{Example 1.}
Let $\mathcal{K} = \mathbb{C}$ be the field of complex numbers and $\mathcal{F}=\mathbb{C}(x_1,x_2)$  be the field of rational functions in variables~$x_1,x_2$. Then, the polynomials
\begin{align}
g_1 = x_1^2 x_2, \quad g_2 = x_2^2, \quad g_3 = x_1x_2 \nonumber
\end{align}
are algebraically dependent over $\mathbb{C}$ because $f(g_1,g_2,g_3)=0$ identically for all $(x_1,x_2)$, where $f(z_1,z_2,z_3) = z_1^2 z_2 - z_3^4$.\\

\noindent
\textbf{Example 2.} The two complex numbers $a = \sqrt{\pi}, \; b = 3 \pi + 2$ are algebraically dependent over the field of rational numbers because by defining $f(z_1 , z_2) = 3z_1^2 - z_2 + 2$, we have $f(a,b) = 0$.\\

Notice that the definition of algebraic independence is in many ways similar to the standard notion of linear independence from linear algebra. In fact, if the function~$f$ in \eqref{AlgInd} is required to be linear, then algebraic independence reduces to the usual concept of linear independence. Similar to linear algebra, we can define a basis for the field~$\mathcal{F}$ using the notion of algebraic independence. In particular, 
given any algebraically independent set $S$ over the field $\cal K$, let ${\cal K}(S)$ denote the field of rational functions in $S$ with coefficients taken from the field $\cal K$.
For any field extension ${\cal F}/{\cal K}$, it is always possible to find a set $S$ in ${\cal F}$, algebraically independent over $\cal K$, such that ${\cal F}$ is an algebraic extension of ${\cal K}(S)$. Such a set $S$ is called a transcendence basis of $\cal F$ over $\cal K$. All transcendence bases have the same cardinality, equal to the transcendence degree of the extension ${\cal F}/{\cal K}$. {\color{black}If every element in $\cal F$ is algebraic over $\cal K$, then we say ${\cal F}/{\cal K}$ is an algebraic extension. In this case, the transcendence degree of ${\cal F}$ over $\cal K$ is zero.} \\

\noindent
\textbf{Example 3.} The two polynomials $g_1$ and $g_2$ in Example 1 are algebraically independent over $\mathbb{C}$. Together, they constitute a transcendental basis for $\mathbb{C}(x_1,x_2)$ over $\mathbb{C}$.
\\

The following table shows similar concepts between linear algebra and transcendental field extension (see \cite{Field_book,Atiyah} for more details).
\begin{center}
\begin{tabular}{|c|c|}
  \hline
  \textbf{Linear algebra} & \textbf{Transcendental field extension} \\
  \hline
  linear independence & algebraic independence \\
  \hline
  $A \subseteq {\rm span}(B)$ & $A$ algebraically dependent on $B$ \\
  \hline
  linear basis & transcendence basis \\
  \hline
  dimension & transcendence degree \\
  \hline
\end{tabular}
\end{center}
\medskip

In linear algebra, it is well known that any $(n+1)$ vectors $\bv_1,\bv_2,...,\bv_{n+1}$ in an $n$-dimensional vector space must be linearly dependent. In other words, there exists a nonzero linear function $f(z_1,z_2,...,z_{n+1})$ such that $f(\bv_1,\bv_2,...,\bv_{n+1})=0$. A similar result holds for algebraic independence. For example, any $(n+1)$ polynomials $g_1$, $g_2$,..., $g_{n+1}$ defined on $n$ variables $(x_1,x_2,...,x_n)$ must be algebraically dependent. Consequently, there exists a nonzero polynomial $f(z_1,z_2,...,z_{n+1})$ such that
$$
f(g_1,g_2,...,g_{n+1})=0,\quad \forall\ (x_1,x_2,...,x_n).
$$
Example 1 is an instance of this property with $n=2$. {\color{black} The following example states this property, to be used in the proof of Theorem~\ref{thm:bd}, in a more formal setting.}\\

\noindent
\textbf{Example 4.} {\color{black}Let~$\mathbb{C}(z_1,z_2,\ldots,z_n)$ denote the field of rational functions in~$n$ variables with coefficients in~$\mathbb{C}$. The set~$\{z_1,z_2,\ldots,z_n\}$ is a maximal algebraically independent set in~$\mathbb{C}(z_1,z_2,\ldots,z_n)$. Hence the transcendence degree of the field extension $\mathbb{C}(z_1,z_2,\ldots,z_n)/\mathbb{C}$ is~$n$. Furthermore, for any $m$~polynomials
\[g_1(z_1,z_2,\ldots,z_n), \;g_2(z_1,z_2,\ldots,z_n), \;\ldots, \; g_m(z_1,z_2,\ldots,z_n),\]
where $m>n$, there exists a nonzero polynomial $f(\cdot)$ such that~$f(g_1,g_2,\ldots,g_m)=0, \ \forall \; z_1,z_2,\ldots, z_n.$}
\\

{\color{black}
Next we describe a useful local expansion of a multivariate polynomial function. Recall that for any univariate polynomial $f$ and any $\bar x\in\mathbb{C}$, there holds
\[
f(x)=f(\bar x)+(x-\bar x)g(x),\ \mbox{for all $x\in \mathbb{C}$,}
\]
where $g$ is some polynomial dependent on $\bar x$ and the coefficients of $f$ only. Similarly, for a $n$-variate polynomial $f$ defined on the variables $\bx=(x_1,x_2,...,x_n)$ and any $\bar\bx\in\mathbb{C}^n$, we have
\[
f(\bx)=f(\bar \bx)+\sum_{i=1}^n(x_i-\bar x_i)g_i(\bx)=f(\bar \bx)+(\bx-\bar \bx)^T\bg(\bx),\ \forall\; \bx\in \mathbb{C}^n,
\]
where each $g_i$ is some polynomial dependent on $\bar \bx$ and the coefficients of $f$ only. If we replace the scalar variable $x_i$ by a matrix variable $\bX_i$, then we can write
\begin{equation}\label{eq:taylor}
f(\bX)=f(\bar \bX)+\sum_{i=1}^n\tr\left((\bX_i-\bar \bX_i)\bG_i(\bX)\right),\ \forall\; \bX,
\end{equation}
where each $\bG_i$ is a matrix whose entries are polynomials dependent on the entries of $\bar\bX$ and the coefficients of $f$ only. The local expansion \eqref{eq:taylor} will be used in the proof of Theorem~\ref{thm:bd}.
}

To prove the converse of Theorem~\ref{thm:bd}, we will use the concepts of Zariski topology and a Zariski constructible set. We briefly review these concepts next (see \cite{Matsumura} for more details). Consider $\mathbb{C}^n$, the $n$-dimensional vector space over the field of complex numbers $\mathbb{C}$. [One can replace $\mathbb{C}$ by any algebraically closed field.] The Zariski topology for $\mathbb{C}^n$ is defined by specifying its closed sets, and these are taken simply to be all the algebraic sets in $\mathbb{C}^n$. That is, the closed sets under Zariski topology are those of the form
\[
S=\{\bx\in\mathbb{C}^n\mid f_i(\bx)=0,\ i=1,2,...,m\}
\]
where $\{f_i\}_{i=1}^m$ is any set if polynomials with coefficients taken from $\mathbb{C}$. For example, the entire space $\mathbb{C}^n$ is Zariski closed (Take~$m=1$ and $f_1$ to be the zero function, i.e., $f_1(x) = 0,\ \forall \; x$). All other Zariski closed sets have zero measure. A nonempty Zariski open set (the complement of a Zariski closed set) always has dimension $n$. If a property holds over a Zariski open set, we say the property holds \emph{generically}.

In topology, a set is locally closed if it is the intersection of an open set with a closed set. A \emph{constructible set} is defined as a finite union of locally closed sets. Thus, a Zariski constructible set is simply a finite collection of sets, each defined by the feasible set of finitely many polynomial equations and polynomial inequalities. Clearly, if a Zariski constructible set has dimension $n$, then it must contain a Zariski open subset.\\

Let $\phi_1,\phi_2,\ldots, \phi_n$ be polynomials in $x_1,x_2, \ldots,x_n$ with coefficients from~$\mathbb{C}$. They define a map~$\Phi: \mathbb{C}^n \mapsto \mathbb{C}^n$ as follows: $\Phi(\bx) = (\phi_1(\bx),\phi_2(\bx),\ldots, \phi_n(\bx)) \in \mathbb{C}^n$. Chevalley's Theorem says that the image of this map is a constructible set (see \cite{Atiyah} for more details).\\

\noindent
\textbf{Example 5.} Let $\Phi: \mathbb{C}^2 \mapsto \mathbb{C}^2$ be defined by $\Phi(\bx) = (\phi_1(\bx), \phi_2 (\bx))$ where~$\phi_1(x) = x_1$ and $\phi_2(x) = x_1 x_2$. Let $\mathcal{L}$ be the line $\{\bx \in \mathbb{C}^2: x_1 =0\}$. The image of~$\Phi$ is the union of two locally closed sets, $\mathbb{C}^2 \backslash \mathcal{L}$ (which is in fact open) and  the point~$(0,0)$ (which is indeed closed).\\

Let the image of $\Phi$ be the union of locally closed subsets~$\mathcal{W}_1, \mathcal{W}_2, \ldots, \mathcal{W}_p$ where~$\mathcal{W}_i = \mathcal{U}_i \bigcap \mathcal{V}_i$ and $\mathcal{V}_i$ is closed and $\mathcal{U}_i$ is open. Assume the Jacobian of $\phi_1,\phi_2,\ldots,\phi_n$ is nonsingular at some point~$\bx \in \mathbb{C}^n$. The Implicit Function Theorem says that the image of $\Phi$ contains a small open disc around $\Phi(\bx)$, hence the measure of the image is nonzero. This implies that for some~$i$, $\mathcal{V}_i = \mathbb{C}^n$ and $\mathcal{W}_i = \mathcal{U}_i$, i.e., the image of the map~$\Phi(\cdot)$ contains a Zariski open set. {\color{black} Thus, if a certain property is shown to hold over the image of a polynomial map $\Phi:\mathbb{C}^n\mapsto \mathbb{C}^n$ whose Jacobian is nonsingular at some point, then this property must hold generically. We will use this approach to establish the generic feasibility of interference alignment for certain MIMO interference channels (Theorem~\ref{thm:achievability}).}

\subsection{Proof of Theorem~\ref{thm:bd}}

We now use the transcendental field extension theory to establish Theorem~\ref{thm:bd}.
\begin{proof}
The inequality \eqref{I1} is obvious due to \eqref{IA2}. To prove \eqref{I2}, assume $M_j \leq N_k$. Since $\bH_{kj}$ is generic, $\rank(\bH_{kj}\bV_j) = d_j$. Furthermore, due to \eqref{IA2}, the beamformer~$\bU_k$ must be full rank and hence $d_k + d_j$ must be no more than the total dimension~$N_k$. Similar argument shows that $d_k + d_j \leq M_j$ when~$M_j \geq N_k$. Thus, $d_k + d_j \leq \max\{M_j , N_k\}$.\\

For simplicity of notations, we prove \eqref{IAbound} for the case~$\mathcal{I} = \mathcal{J}$. When~$\mathcal{I} \subset \mathcal{J}$, the proof is the same except that we need to focus on a subset of equations/variables. Now, we prove \eqref{IAbound} for the case of~$\mathcal{I} = \mathcal{J}$ by contradiction. Assume the contrary that
\begin{align}
\sum_{k=1}^K (M_k - d_k)d_k + \sum_{j = 1}^K (N_j - d_j)d_j < \sum_{k,j = 1, k\neq j}^K d_k d_j,
\end{align}
and the interference alignment conditions in \eqref{IA1} and \eqref{IA2} are satisfied. The interference alignment condition \eqref{IA2} implies that
$\bU_k$ and $\bV_k$ must have full column rank.
{\color{black}By applying appropriate linear transformations to the rows of $\bU_k$ and $\bV_k$, we can write
\begin{equation}\label{eq1}
\bU_k=\bP_k^u\left[\begin{array}{l} \bI \\ \bar\bU_k\end{array}\right] \bQ_k^u,\quad\bV_k=\bP_k^v\left[\begin{array}{l} \bI \\ \bar\bV_k\end{array}\right]\bQ_k^v, \quad \forall k,
\end{equation}
where $\bar\bU_k$ and $\bar\bV_k$ are some matrices of size $(N_k-d_k)\times d_k$ and $(M_k-d_k)\times d_k$ respectively. The matrices $\bP_k^u$ and $\bP_k^v$ are square permutation matrices of size $N_k\times N_k$ and $M_k\times M_k$ respectively, while $\bQ_k^u, \bQ_k^v$ are some invertible matrices of size~$d_k \times d_k$.
Define $\bar\bH_{ij}=\bP_i^{u \;-1}\bH_{ij}\bP^{v \;-1}_j$ to be the permuted version of~$\bH_{kj}$.}
We can partition the matrix $\bar\bH_{kj}$ as
\[
\bar\bH_{kj}=\left[\begin{array}{cc}
\bar\bH_{kj}^{(1)}& \bar\bH_{kj}^{(2)} \\ [5pt] \bar\bH_{kj}^{(3)} & \bar\bH_{kj}^{(4)} \end{array}\right]
\]
where $\bar\bH_{kj}^{(1)}$ is of size $d_k \times d_j$. {\color{black}Since the channel matrices $\{\bH_{kj}\}_{k\neq j}$
are drawn from a continuous probability distribution, the transformed
channel matrices $\{\bar\bH^{(1)}_{kj}\}_{k\neq j}$ remain generic.}
{\color{black}Rewriting the linear interference alignment condition \eqref{IA1} in terms of $\bar\bU_k$ and $\bar\bV_k$, we obtain
\begin{equation}
\label{eq1half}
\left[\begin{array}{cc}\bI & \bar\bU_k^H\end{array}\right]\left[\begin{array}{cc}
\bar\bH_{kj}^{(1)}& \bar\bH_{kj}^{(2)} \\ [5pt] \bar\bH_{kj}^{(3)} & \bar\bH_{kj}^{(4)} \end{array}\right]\left[\begin{array}{l} \bI \\ \bar\bV_j\end{array}\right]=\bzero
\end{equation}
or equivalently
\begin{equation}\label{eq2}
\bar\bH_{kj}^{(1)}+\bar\bU^H_k\bar\bH_{kj}^{(3)}+\bar\bH_{kj}^{(2)}\bar\bV_j+\bar\bU^H_k\bH_{kj}^{(4)}\bar
\bV_j=\bzero,\quad \forall \ j\neq k.
\end{equation}
}
The above system of quadratic equations, {\color{black}first derived in \cite{Jafarbound}}, is equivalent to the interference alignment condition \eqref{IA1}. The number of scalar equations in \eqref{eq2} is
\[\sum_{j,k=1,j\neq k}^Kd_kd_j, \]
while the total number of scalar variables (i.e., the scalar entries of the unknown matrices $\{\bar\bU_k\}$'s and $\{\bar\bV_k\}$'s) is
\[
\sum_{k=1}^K(M_k-d_k)d_k+\sum_{k=1}^K(N_k-d_k)d_k=\sum_{k=1}^K(M_k+N_k-2d_k)d_k.
\]
So if
\begin{equation}\label{eq3}
\sum_{k=1}^K(M_k+N_k-2d_k)d_k<\sum_{j,k=1,j\neq k}^Kd_kd_j,
\end{equation}
then we would have more constraints than unknowns in the interference alignment condition \eqref{eq2},
which we will argue cannot hold. 

Let us consider the field $\cal F$ defined over the field of complex numbers $\mathbb{C}$, consisting of all rational functions in the entries of the matrices $\{\bar\bU_k\}_{k=1}^K$ and $\{\bar\bV_k\}_{k=1}^K$. Note that the entries of the matrices $\{\bar\bU_k,\bar\bV_k \}_{k=1}^K$ form a transcendence basis for $\cal F$ over $\mathbb{C}$. Thus, the transcendence degree of $\cal F$ is  $\sum_{k=1}^K(M_k+N_k-2d_k)d_k$, which is equal to the number of entries in the matrices $\{\bar\bU_k,\bar\bV_k \}_{k=1}^K$.

Now, let us consider the matrices $\bH_{kj}^{(2)},  \bH_{kj}^{(3)}, \bH_{kj}^{(4)}$ for all $k,j,k \neq j$ and define the matrix $\bF_{kj}$:
\begin{equation}\label{eq:F}
\bF_{kj}(\bbU,\bbV) \triangleq -\left(\bar\bU_k^H \bar\bH_{kj}^{(3)}  + \bar\bH_{kj}^{(2)}\bar\bV_j + \bar\bU_k^H \bar\bH_{kj}^{(4)} \bar\bV_j\right),
\end{equation}
\noindent for all $k,j$ with $k \neq j$.
Note that $\bF_{kj}$ is a $d_k \times d_j$ matrix, with each entry being a quadratic polynomial function of the entries in the matrices $\bar\bU_k$ and $\bar\bV_k$. As a result, the entries of $\bF_{kj}$ belong to the field $\cal F$. 
Moreover, if \eqref{eq3} holds, then the number of quadratic polynomials given in the matrices $\{\bF_{kj}\}_{k\neq j}$ is strictly larger than the transcendence degree of $\cal F$ over $\mathbb{C}$. Hence, as we discussed in the algebraic preliminaries (Section~\ref{sec:preliminary}; see also \cite[Chapter 8]{Field_book}), these quadratic polynomials in $\mathcal{F}$ must be algebraically dependent. This implies that there exists a nonzero polynomial $p$ which vanishes at the quadratic polynomials corresponding to the entries of the matrices $\{\bF_{kj}\}_{k\neq j}$, i.e.,
\begin{align}
p\left(\bF_{12}(\bbU,\bbV),\bF_{13}(\bbU,\bbV),\ldots, \bF_{K(K-1)}(\bbU,\bbV)\right) = 0, \nonumber
\end{align}
for all $\{\bar\bU_k,\; \bar\bV_k\}_{k=1}^K$.
Notice that the polynomial $p$ is independent of the channel matrices $\left\{\bar\bH^{(1)}_{kj}\right\}_{k\neq j}$, even though it does depend on the matrices $\left\{ \bar\bH_{kj}^{(2)},\bar\bH_{kj}^{(3)},\bar\bH_{kj}^{(4)}\right\}_{k\neq j}$. {\color{black} When viewed as a polynomial of the matrix variable $\bX:=\left(\bbH^{(1)}_{12},\bbH^{(1)}_{13},\ldots, \bbH^{(1)}_{K(K-1)}\right)$, $p(\cdot)$ can be expanded locally at $\bar\bX:=(\bF_{12}(\bbU,\bbV),\bF_{13}(\bbU,\bbV),\ldots,\bF_{K(K-1)}(\bbU,\bbV))$ using \eqref{eq:taylor}:}
\begin{eqnarray}\nonumber
&&p\left(\bbH^{(1)}_{12},\bbH^{(1)}_{13},\ldots, \bbH^{(1)}_{K(K-1)}\right) \\
&&\quad= p\left(\bF_{12}(\bbU,\bbV),\bF_{13}(\bbU,\bbV),\ldots,\bF_{K(K-1)}(\bbU,\bbV)\right)\nonumber\\
&&\quad\ \  + \sum_{k\neq j}\tr\left((\bbH_{kj}^{(1)}-\bF_{kj}(\bbU,\bbV))\bQ_{kj}(\bbU,\bbV)\right),\nonumber
\end{eqnarray}
for all  $\{\bbU_k,\;\bbV_k\}_{k=1}^K$, where $\bQ_{kj}$ is some polynomial matrix of size $d_j\times d_k$. 
Combining the above two identities yields
\begin{eqnarray}\nonumber
\!\!\!\!\!\!\!\!\!&&p\left(\bbH^{(1)}_{12},\bbH^{(1)}_{13},\ldots, \bbH^{(1)}_{K(K-1)}\right)\\
\!\!\!\!\!\!\!\!\!&&\quad= \sum_{k\neq j}\tr\left((\bbH_{kj}^{(1)}-\bF_{kj}(\bbU,\bbV))\bQ_{kj}(\bbU,\bbV)\right).\label{eq4}
\end{eqnarray}
Notice that this equality holds for all choices of $\{\bbU_k,\;\bbV_k\}_{k=1}^K$.
If the interference alignment condition \eqref{eq2} holds, then we have
\[
\bbH_{kj}^{(1)}-\bF_{kj}(\bbU,\bbV)=0, \quad \mbox{for all $k,\;j$ with $k\neq j$},
\]
for some special choices of the matrices $\{\bbU_k,\bbV_k\}_{k=1}^K$.
 Substituting this condition into the right hand side of \eqref{eq4}, we obtain
\begin{equation}\label{eq5}
p\left(\bbH^{(1)}_{12},\bbH^{(1)}_{13},\ldots, \bbH^{(1)}_{K(K-1)}\right)=0.
\end{equation}

Notice that the polynomial $p$ is independent of the channel matrices $\{\bbH^{(1)}_{kj}\}_{k\neq j}$. Under our channel model, the channel matrices $\{\bbH^{(1)}_{kj}\}_{k\neq j}$ are drawn from a continuous probability distribution. It follows that the condition \eqref{eq5} cannot hold unless $p$ is identically zero, which contradicts the requirement $p\neq 0$.
\end{proof}

Theorem~\ref{thm:bd} settles the conjecture of \cite{Jafarbound} in one direction, namely, the improperness of polynomial system \eqref{IA1} and \eqref{IA2} implies the infeasibility of interference alignment. From the proof of Theorem~\ref{thm:bd}, it can be seen that the upper bound \eqref{IAbound} holds for any choice of fixed channel matrices~{\color{black}$\{\bbH^{(2)}_{kj},\bbH^{(3)}_{kj},\bbH^{(4)}_{kj}\}_{k \neq j}$} as long as {\color{black} the channel matrices $\{\bbH^{(1)}_{kj}\}_{k \neq j}$ are generic.

Also, we remark that
the proof technique for Theorem~\ref{thm:bd} can be used to bound the DoF for a single antenna parallel interference channel (e.g., the OFDM channel). In particular, consider a single input single output interference channel with $M$~channel extensions, i.e., the channel matrices are diagonal and of the size~$M\times M$. Assuming each user transmits one data stream ($d_k=1$ for all $k$), we can check that the properness of the interference alignment condition \eqref{IA1}-\eqref{IA2} is equivalent to $K+1\le 2M$ (see \cite[Theorem 1]{Jafarbound}). Using a completely identical proof, we can show that the properness condition $K+1\le 2M$ is a necessary condition for the feasibility of interference alignment. This implies that for the single beam case the total DoF per channel extension is upper bounded by 2, regardless of the number of channel extensions. This DoF bound has also been proposed recently in \cite{Berry}.}

\subsection{The Converse Direction}

In the remainder of this section, we consider the converse of Theorem~\ref{thm:bd}. In particular, we show that the upper bound in Theorem~\ref{thm:bd} is tight for a special case where all users have the same DoF~$d$ and number of antennas is divisible by~$d$. In this case, we have $K(K-1)$ matrix equations in~\eqref{eq2}, each giving rise to~$d^2$ scalar equations. For any subset of these matrix equations indexed by $\mathcal{I}$, with $\mathcal{I} \subseteq \mathcal{J}$,
the number of corresponding scalar equations is equal to $d^2|{\cal I}|$, whereas the number of scalar variables involved in the equations indexed by ${\cal I}$ is
$$\left(\sum_{k:(k,j)\in {\cal I}}(M_k-d)+\sum_{j:(k,j)\in {\cal I}}(N_j-d)\right)d.$$\\

The next result shows that the bound in Theorem~\ref{thm:bd} is tight if the polynomial system \eqref{eq2} defining interference alignment is proper, i.e., for each ${\cal I}\subseteq {\cal J}$, the number of variables involved in each set of equations indexed by ${\cal I}$ is no less than $d^2|{\cal I}|$, the number of scalar equations. {\color{black} The proof of this result uses the Implicit Function Theorem which involves checking the Jacobian matrix of the polynomial map \eqref{eq:F} is nonsingular at some channel realization $\{\bar\bH_{kj}\}_{k\neq j}$. Notice that the feasibility of interference alignment condition \eqref{eq2} at a given channel realization  $\{\bar\bH_{kj}\}_{k\neq j}$ is equivalent to  $\{\bar\bH^{(1)}_{kj}\}_{k\neq j}$ being contained in the image of the polynomial map \eqref{eq:F} which is defined by $\{\bar\bH^{(2)}_{kj},\bar\bH^{(3)}_{kj},\bar\bH^{(4)}_{kj}\}_{k\neq j}$. Fix
a generic choice of $\{\bar\bH^{(2)}_{kj},\bar\bH^{(3)}_{kj},\bar\bH^{(4)}_{kj}\}_{k\neq j}$ for which the Jacobian of the polynomial map \eqref{eq:F} is nonsingular. The Implicit Function Theorem allows us to establish the existence of a locally invertible map from the space of channel submatrices $\{\bar\bH^{(1)}_{kj}\}_{k\neq j}$ to the space of beamforming matrices, and that the image of this polynomial map \eqref{eq:F} is locally full-dimensional.  Therefore, for all channel submatrices near the given channel realization $\{\bar\bH^{(1)}_{kj}\}_{k\neq j}$, the interference alignment condition \eqref{eq2} can be satisfied by some beamforming matrices. By Chevalley's Theorem from algebraic geometry \cite{Matsumura} (see also the discussion at the end of Section~\ref{sec:preliminary}), the ``local full-dimensionality" of the image of \eqref{eq:F} implies that this image, which is a constructible set, must contain a nonempty Zariski open set. As a result, the whole image of polynomial map \eqref{eq:F} contains all generically generated channel sub-matrices $\{\bar\bH^{(1)}_{kj}\}_{k\neq j}$. Since the choice of channel submatrices $\{\bar\bH^{(2)}_{kj},\bar\bH^{(3)}_{kj},\bar\bH^{(4)}_{kj}\}_{k\neq j}$ is also generic, this then establishes the feasibility of interference alignment for all generically generated channel matrices $\{\bar\bH_{kj}\}_{k\neq j}$.}

\begin{thm}
\label{thm:achievability}
Assume that all users have the same DoF~$d_k = d$, where~$1 \leq d \leq \min \{M_k,N_k\},\;\forall k$. Furthermore, suppose that $M_k$ and $N_k$ are divisible by $d$ for all~$k$. Then interference alignment is achievable for generic channel coefficients if and only if for each subset $\cal I$ of equations in \eqref{eq2}, the number of variables involved in these equations is no less than the number of matrix equations times $d^2$, or equivalently,
\begin{equation}\label{assumption}
|{\cal I}|d\le \sum_{k:(k,j)\in {\cal I}}(M_k-d)+\sum_{j:(k,j)\in {\cal I}}(N_j-d), \quad \quad \forall \; \mathcal{I} \mbox{ with }\mathcal{I}\subseteq \mathcal{J}.
\end{equation}
\end{thm}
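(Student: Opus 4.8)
The plan is to dispatch the ``only if'' direction in one line and then establish the converse by the route sketched in the paragraph preceding the theorem. For ``only if'': putting $d_k=d$ for all $k$ in Theorem~\ref{thm:bd}, inequality \eqref{IAbound} becomes verbatim \eqref{assumption}, while \eqref{I1}--\eqref{I2} are subsumed by the standing hypothesis $1\le d\le\min\{M_k,N_k\}$. So the entire content lies in proving that \eqref{assumption} is sufficient for generic feasibility.

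For ``if'', I would first recast feasibility as a statement about a single polynomial map. Fix the channel blocks $\{\bbH^{(2)}_{kj},\bbH^{(3)}_{kj},\bbH^{(4)}_{kj}\}_{k\neq j}$ and view $\bF=(\bF_{kj})_{(k,j)\in\mathcal{J}}$ from \eqref{eq:F} as a polynomial map from the beamformer space $\{\bbU_k,\bbV_k\}_{k=1}^K$ into $\prod_{(k,j)\in\mathcal{J}}\mathbb{C}^{d\times d}$ (treating $(\cdot)^H$ as a formal transpose, as in the proof of Theorem~\ref{thm:bd}). By \eqref{eq2}, interference alignment is feasible at a channel realization iff $\{\bbH^{(1)}_{kj}\}_{k\neq j}$ lies in the image of $\bF$; hence it suffices to show that for generic $\{\bbH^{(2)}_{kj},\bbH^{(3)}_{kj},\bbH^{(4)}_{kj}\}$ this image contains a nonempty Zariski open subset of $\prod_{(k,j)}\mathbb{C}^{d\times d}$. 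By Chevalley's theorem the image is constructible, and by the dimension fact recalled in Section~\ref{sec:preliminary} it contains a nonempty Zariski open set once its dimension equals $d^2|\mathcal{J}|=\dim\prod_{(k,j)}\mathbb{C}^{d\times d}$; by the Implicit Function Theorem this in turn holds as soon as the Jacobian of $\bF$ attains full row rank $d^2|\mathcal{J}|$ at some point. Since ``the Jacobian never reaches full row rank'' is a Zariski closed condition on the fixed channel blocks, it is then enough to exhibit one choice of $\{\bbH^{(2)}_{kj},\bbH^{(3)}_{kj},\bbH^{(4)}_{kj}\}$ and one beamformer point at which the Jacobian of $\bF$ has full row rank.

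This construction is the heart of the proof. I would evaluate at the origin $\bbU_k=\bbV_k=\bzero$, where the quadratic term in \eqref{eq:F} drops out and the differential of $\bF_{kj}$ is the linear map $(\delta\bbU,\delta\bbV)\mapsto-(\delta\bbU_k^H\bbH^{(3)}_{kj}+\bbH^{(2)}_{kj}\delta\bbV_j)$, depending only on $\bbH^{(2)}_{kj}$ and $\bbH^{(3)}_{kj}$. Here divisibility enters: partition the $N_k-d$ rows of $\bbH^{(3)}_{kj}$ into $N_k/d-1$ blocks of size $d\times d$ and set the $a$-th block to $g^{(3)}_{kj,a}\bI$, and partition the $M_j-d$ columns of $\bbH^{(2)}_{kj}$ into $M_j/d-1$ blocks of size $d\times d$ and set the $b$-th block to $g^{(2)}_{kj,b}\bI$, with scalars $g^{(\cdot)}_{\cdot,\cdot}$ still free. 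Partitioning $\delta\bbU_k^H$ and $\delta\bbV_j$ conformally into $d\times d$ blocks $X^{(a)}_k,Y^{(b)}_j$, the differential becomes $-\big(\sum_a g^{(3)}_{kj,a}X^{(a)}_k+\sum_b g^{(2)}_{kj,b}Y^{(b)}_j\big)$, which acts entrywise on the $d^2$ matrix entries. Hence the whole Jacobian is block diagonal, equal to $d^2$ identical copies of the scalar linear map $\Phi:\big((x^{(a)}_k),(y^{(b)}_j)\big)\mapsto\big(\sum_a g^{(3)}_{kj,a}x^{(a)}_k+\sum_b g^{(2)}_{kj,b}y^{(b)}_j\big)_{(k,j)\in\mathcal{J}}\in\mathbb{C}^{|\mathcal{J}|}$, and it remains only to choose the $g$'s so that $\Phi$ is onto. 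The matrix of $\Phi$ has a fixed support pattern with algebraically independent entries, so its generic rank equals its term (structural) rank; by König that term rank is $|\mathcal{J}|$ iff the associated bipartite ``equation--variable'' graph has a matching saturating $\mathcal{J}$, which by Hall's condition holds iff for every $\mathcal{I}\subseteq\mathcal{J}$ the number of distinct variables meeting the equations indexed by $\mathcal{I}$ is at least $|\mathcal{I}|$; multiplying through by $d$ and using $d\mid M_k$, $d\mid N_k$, this last inequality is exactly \eqref{assumption}. Thus under \eqref{assumption} a generic choice of $g$'s makes $\Phi$ surjective, hence the Jacobian full row rank, and the reduction in the previous paragraph then promotes this to generic channel coefficients.

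The principal obstacle is precisely this last step: a naive ``count equations against unknowns'' argument cannot suffice, since \eqref{assumption} requires the deficiency inequality for \emph{every} subset $\mathcal{I}$. One must (i) pick an evaluation point and a channel specialization special enough to decouple the $d^2$ matrix entries --- the block-scalar choice above, which is exactly where divisibility by $d$ is used in an essential way --- and then (ii) recognize the resulting scalar problem as a generic-pattern-matrix rank question, so that the full combinatorial strength of \eqref{assumption} feeds directly into Hall's theorem. A secondary technical point, handled exactly as in the discussion preceding the theorem, is justifying the passage from ``full-rank Jacobian at one special point and one special channel'' to ``full-dimensional image of $\bF$ for generic channels'', which rests on the Zariski-openness of the full-rank condition together with Chevalley's theorem and the Implicit Function Theorem.
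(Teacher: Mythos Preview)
Your proposal is correct and follows essentially the same route as the paper: the same Chevalley/Implicit Function Theorem reduction to a Jacobian rank computation, the same $d\times d$ block decomposition (enabled by divisibility), and the same bipartite equation--variable graph on which \eqref{assumption} becomes Hall's condition. The only cosmetic difference is in the final construction: the paper fixes a complete matching and sets the corresponding channel blocks to $\bI$ (and the rest to $\bzero$), yielding a block-permutation Jacobian directly, whereas you set all blocks to generic scalar multiples of $\bI$ and invoke K\"onig/term rank to get surjectivity of $\Phi$; both arguments are equivalent reformulations of the same matching fact.
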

\begin{proof}
First of all, the ``only if" direction is a direct consequence of Theorem~\ref{thm:bd}.
We now focus on the ``if" direction. Consider the polynomial map that we get by concatenating all maps in~\eqref{eq:F} for all $(k,j) \in \mathcal{J}$, i.e.,
\begin{equation}\label{eq:Fs}
\begin{split}
\bF_{12}(\bbU,\bbV) &= -\left(\bar\bU_1^H \bar\bH_{12}^{(3)}  + \bar\bH_{12}^{(2)}\bar\bV_2 + \bar\bU_1^H \bar\bH_{12}^{(4)} \bar\bV_2\right),\\
\bF_{13}(\bbU,\bbV) &= -\left(\bar\bU_1^H \bar\bH_{13}^{(3)}  + \bar\bH_{13}^{(2)}\bar\bV_3 + \bar\bU_1^H \bar\bH_{13}^{(4)} \bar\bV_3\right),\\
&\ \ \vdots\\
\bF_{K(K-1)}(\bbU,\bbV) &= -\left(\bar\bU_{K}^H \bar\bH_{K(K-1)}^{(3)}  + \bar\bH_{K(K-1)}^{(2)}\bar\bV_{K-1} + \bar\bU_{K}^H \bar\bH_{K(K-1)}^{(4)} \bar\bV_{K-1}\right),
\end{split}
\end{equation}
which maps the variables~$\{\bbU_k,\bbV_k\}_{k=1}^K$ to the $\{\bF_{k,j}\}_{k\neq j}$ space.
We will first show that for a specific set of channel matrices, the rank of the Jacobian of this polynomial map is~$K(K-1)d^2$, equal to the number of equations. Hence, if we restrict the equations to a subset of variables of size~$K(K-1)d^2$, the determinant of the Jacobian matrix of the polynomial map \eqref{eq:Fs} does not vanish identically. {\color{black} This step will establish the existence of a locally invertible map from the space of beamforming matrices to the space of channel matrices. By Chevalley's Theorem (see \cite[Chapter 2, 6.E.]{Matsumura}), this image is a constructible subset under Zariski topology. This, plus the fact that the image is locally full-dimensional, implies that the interference alignment condition \eqref{eq2} is feasible for all generically chosen channel matrices. This then will show the ``if" direction of Theorem~\ref{thm:achievability}.}

{\color{black} To show the nonsingularity of the Jacobian matrix, we need to remove some redundant variables in $\{\bar\bU_k,\bar\bV_j\}_{k, j}$ (this occurs when there are more variables than equations), and then construct a specific set of channel matrices $\{\bH_{kj}^{(2)},\bH_{kj}^{(3)},\bH_{kj}^{(4)}\}_{k\neq j}$ and a solution $\{\bar\bU_k,\bar\bV_j\}_{k, j}$ at which the Jacobian matrix of \eqref{eq:Fs} is nonsingular. Before providing a rigorous description for such a construction, we first consider a toy example with $K=3$ users where $M_k=3, N_k=2, d_k=1,$ for $k=1,2,3$. For this specific example, the assumption \eqref{assumption} is satisfied and the equations in~\eqref{eq:Fs} can be rewritten as
\begin{align}
\bF_{12}(\bbU,\bbV) &= -\left(\bar\bU_1^H \bar\bH_{12}^{(3)}  + \bar\bH_{12}^{(2)}\bar\bV_2 + \bar\bU_1^H \bar\bH_{12}^{(4)} \bar\bV_2\right),\nonumber\\
\bF_{13}(\bbU,\bbV) &= -\left(\bar\bU_1^H \bar\bH_{13}^{(3)}  + \bar\bH_{13}^{(2)}\bar\bV_3 + \bar\bU_1^H \bar\bH_{13}^{(4)} \bar\bV_3\right),\nonumber\\
\bF_{21}(\bbU,\bbV) &= -\left(\bar\bU_2^H \bar\bH_{21}^{(3)}  + \bar\bH_{21}^{(2)}\bar\bV_1 + \bar\bU_2^H \bar\bH_{21}^{(4)} \bar\bV_1\right),\nonumber\\
\bF_{23}(\bbU,\bbV) &= -\left(\bar\bU_2^H \bar\bH_{23}^{(3)}  + \bar\bH_{23}^{(2)}\bar\bV_3 + \bar\bU_2^H \bar\bH_{23}^{(4)} \bar\bV_3\right),\nonumber\\
\bF_{31}(\bbU,\bbV) &= -\left(\bar\bU_3^H \bar\bH_{31}^{(3)}  + \bar\bH_{31}^{(2)}\bar\bV_1 + \bar\bU_3^H \bar\bH_{31}^{(4)} \bar\bV_1\right),\nonumber\\
\bF_{32}(\bbU,\bbV) &= -\left(\bar\bU_3^H \bar\bH_{32}^{(3)}  + \bar\bH_{32}^{(2)}\bar\bV_2 + \bar\bU_3^H \bar\bH_{32}^{(4)} \bar\bV_2\right),\nonumber
\end{align}
where $\bbV_k = [v_{k_1}\; v_{k_2}]^T \in \mathbb{C}^{2 \times 1}$, $\bbU_k = [u_k] \in \mathbb{C}$,  for $k=1,2,3$, and $\bbH_{kj}^{(2)} = [\barh_{kj}^{(2),1} \; \barh_{kj}^{(2),2}]^T \in \mathbb{C}^{2 \times 1}$, $\bbH_{kj}^{(3)} = [\barh_{kj}^{(3)}] \in \mathbb{C}$, for $k\neq j$. If we set $\bbH_{kj}^{(4)} = 0$ for all channels, one can write the Jacobian of $[\bF_{12}\;\bF_{13}\;\bF_{21}\;\bF_{23}\;\bF_{31}\;\bF_{32}]$ with respect to the variables~$[u_1\; u_2\; u_3\; v_{1_1}\; v_{1_2}\; v_{2_1}\; v_{2_2}\; v_{3_1}\; v_{3_2}]$ as
\begin{align}
\left[
\begin{array}{cccccc}
-\barh_{12}^{(3)} & -\barh_{13}^{(3)} & 0 & 0 & 0 & 0\\
0 & 0 & -\barh_{21}^{(3)} & -\barh_{23}^{(3)} & 0 & 0\\
0 & 0 & 0 & 0 & -\barh_{31}^{(3)} & -\barh_{32}^{(3)}\\
0 & 0 & -\barh_{21}^{(2),1} & 0 & -\barh_{31}^{(2),1} & 0\\
0 & 0 & -\barh_{21}^{(2),2} & 0 & -\barh_{31}^{(2),2} & 0\\
-\barh_{12}^{(2),1} & 0 & 0 & 0 & 0 & -\barh_{32}^{(2),1}\\
-\barh_{12}^{(2),2} & 0 & 0 & 0 & 0 & -\barh_{32}^{(2),2}\\
0 & -\barh_{13}^{(2),1} & 0 & -\barh_{23}^{(2),1} & 0 & 0\\
0 & -\barh_{13}^{(2),2} & 0 & -\barh_{23}^{(2),2} & 0 & 0\\
\end{array}
\right]. \nonumber
\end{align}
One can easily observe that by removing the variables~$\{v_{1_1},v_{2_1},v_{3_2}\}$ and setting
\begin{align}
\barh_{12}^{(3)} = \barh_{23}^{(3)} = \barh_{31}^{(3)} = \barh_{13}^{(2),1} = \barh_{21}^{(2),2} = \barh_{32}^{(2),2} = 1, \nonumber \\
\barh_{13}^{(3)} = \barh_{21}^{(3)} = \barh_{32}^{(3)} = \barh_{12}^{(2),2} = \barh_{31}^{(2),2} = \barh_{23}^{(2),1} = 0,\nonumber
\end{align}
the Jacobian of the mapping~\eqref{eq:Fs} with respect to the remaining variables becomes
\begin{align}
\left[
\begin{array}{cccccc}
-1 & 0 & 0 & 0 & 0 & 0\\
0 & 0 & 0 & -1 & 0 & 0\\
0 & 0 & 0 & 0 & -1 & 0\\
0 & 0 & -1 & 0 & 0 & 0\\
0 & 0 & 0 & 0 & 0 & -1\\
0 & -1 & 0 & 0 & 0 & 0\\
\end{array}
\right], \nonumber
\end{align}
which is clearly nonsingular since there exists exactly one nonzero element in each column/row.

Next we argue that the above construction procedure can be generalized to the case where $M_k$ and $N_k$ are divisible by~$d$, provided that the assumption~\eqref{assumption} is satisfied. The construction of these channel/beamforming matrices and the removal of redundant variables are outlined below.
First, we set $\bH_{kj}^{(4)}=\bzero$, for all $k\neq j$. Then we choose $\{\bar\bU_k,\bar\bV_j\}_{k,j}$ arbitrarily.
It remains to specify $\{\bar\bH_{kj}^{(3)},\bar\bH_{kj}^{(2)}\}_{k\neq j}$. We should do so to ensure that the corresponding Jacobian matrix of \eqref{eq:Fs} at $\{\bar\bU_k,\bar\bV_j\}_{k,j}$ is nonsingular. Since $M_k$ and $N_k$ are divisible by~$d$, we can partition our variables into blocks of size~$d\times d$ and rewrite the mapping~\eqref{eq:Fs} as
\begin{equation}
\label{blockIA}
\bF_{kj}(\bbU,\bbV)=
- \left[  \bbU_{k_1}^H\;\bbU_{k_2}^H\ldots \bbU_{k_{s_k}}^H\right]\left[
\begin{array}{c}
\bbH_{kj}^{(3),1}\\
\bbH_{kj}^{(3),2}  \\
\vdots \\
\bbH_{kj}^{(3),s_k}  \\
\end{array}
\right] - \left[  \bbH_{kj}^{(2),1}\;\bbH_{kj}^{(2),2}\ldots \bbH_{kj}^{(2),t_j}\right]\left[
\begin{array}{c}
\bbV_{j_1}\\
\bbV_{j_2}  \\
\vdots \\
\bbV_{j_{t_j}}  \\
\end{array}
\right], \quad \forall \; k \neq j,
\end{equation}
where $s_k = \frac{M_k}{d}-1, t_j = \frac{N_j}{d}-1$, and $\bbU_{k_i}, \bbV_{j_\ell}, \bbH_{kj}^{(2),i}, \bbH_{kj}^{(3),\ell} \in \mathbb{C}^{d\times d}$.
Consider a bipartite graph~$G$ where the vertices are partitioned into two sets~$\mathcal{X}$ and $\mathcal{Y}$.
Each block of variables will correspond to a node in~$\mathcal{X}$, while each matrix equation in \eqref{blockIA} will correspond to a node in~$\mathcal{Y}$. We draw an edge between a node $x\in \mathcal{X}$ and a node $y \in \mathcal{Y}$ if the block of variables corresponding to node~$x$ appears in the equation corresponding to node~$y$. When viewed on the bipartite graph $G$, the assumption~\eqref{assumption} simply says that for any given set of nodes~$\mathcal{S} \subseteq \mathcal{Y}$, the cardinality of the neighbors of~$\mathcal{S}$ in $\mathcal{X}$ is no smaller than the cardinality of $\mathcal{S}$. This condition is precisely what is required to ensure the existence of a complete matching in $G$ covering all nodes in $\mathcal{Y}$ (Hall's theorem, see \cite[Theorem 3.1.11]{West}). Now consider a fixed complete matching in $G$. Let $A \subseteq \mathcal{X}$ be the set of vertices that are not matched to a node in $\mathcal{Y}$. Then, we can set
to zero all the blocks of the variables corresponding to the vertices in~$A$, i.e., we can remove them from our equations.
Now we choose the rest of the channel matrices so that the determinant of the Jacobian with respect to the remaining variables is nonzero. To this end, we set $\bbH_{kj}^{(3),p} = 0$ if the node for $\bbU_{k_p}$ is not matched to the node in $\mathcal{Y}$ corresponding to the equation $\bF_{kj}$.
Similarly, we set $\bbH_{kj}^{(2),q} = 0$ if $\bbV_{j_q}$ is not matched to $\bF_{kj}$. Moreover, we set all the remaining channel sub-matrices to the $d\times d$ identity matrix. Since this construction is based on a complete matching, it is not hard to see
that the Jacobian for the whole system is a block permutation matrix, with nonzero blocks equal to the negative $d\times d$ identity matrix. Hence the determinant of the Jacobian matrix is equal to the product of
the determinant of all nonzero blocks (up to sign), which is clearly nonzero in our case.}
This completes the description of the procedure to remove potential redundant variables, as well as the procedure to construct all the channel matrices $\{\bH^{(2)}_{kj},\bH^{(3)}_{kj},\bH^{(4)}_{kj}\}_{k\neq j}$ and the beamforming solution $\{\bar\bU_k,\bar\bV_j\}_{k,j}$. The Jacobian matrix of \eqref{eq:Fs} is nonsingular at this constructed channel realization and beamforming solution.
{\color{black} Figure~$\ref{Fig:ToyExample}$ illustrates the construction of graph~${G}$ and a complete matching (in solid lines) for the aforementioned toy example.}\\

\begin{figure}[htbp]
\centering
\includegraphics[width=2.5in]{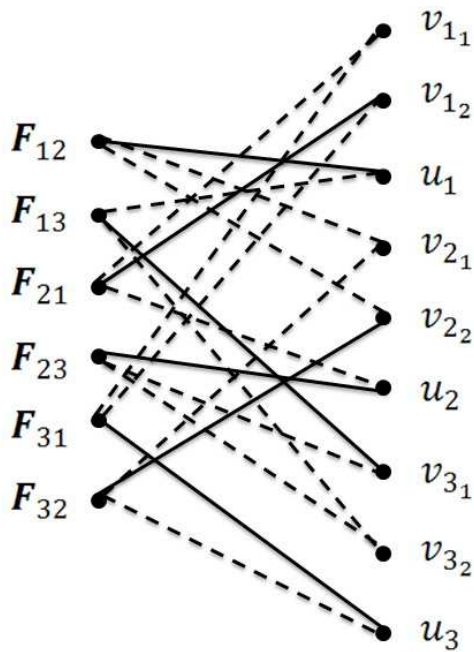}
\caption{The bipartite graph~${G}$ and a complete matching for the toy example}
\label{Fig:ToyExample}
\end{figure}

To complete the proof, we fix a generic choice of $\{\bbH^{(2)}_{kj},\bbH^{(3)}_{kj},\bbH^{(4)}_{kj}\}_{k\neq j}$ for which the Jacobian of \eqref{eq:Fs} is nonsingular. Let~$n$ be the total number of remaining scalar variables in $\{\bar\bU_k,\bar\bV_j\}_{k,j}$ after removing the redundant variables. Notice that $n$ is the same as the number of scalar equations, i.e., $n = d^2 K (K-1)$. Let $R_1 = \mathbb{C}[h_1,h_2,\ldots,h_n]$ and $R_2 = \mathbb{C} [\bar{u}_1,\bar{u}_2,\ldots, \bar{u}_m, \bar{v}_1, \ldots,  \bar{v}_{n-m}]$ be two polynomial rings where $\bar{u}_i$'s and $\bar{v}_j$'s are the entries of the matrices $\{\bbU_k\}_{k=1}^K$ and $\{\bbV_k\}_{k=1}^K$ (after removing the redundant variables), and $h_1, h_2, \ldots, h_n$ are the entries of the matrices~$\{\bbH^{(1)}_{kj}\}_{k\neq j}$. Consider $\{f_i\}_{i=1}^n$ (the components of $\mathbf{F}_{kj}$'s in \eqref{eq:Fs}) as the functions of $\bar{u}$'s and $\bar{v}$'s, i.e., $f_i$'s are polynomials in $R_2$.
These polynomials define a map~$a_\phi$ which maps a point $c = (c_1,c_2,\ldots,c_n)$  to $(f_1(c),f_2(c),\ldots,f_n(c))$.
According to the Chevalley Theorem (see \cite[Chapter 2, 6.E.]{Matsumura}), the image of this map is a Zariski constructible subset of~$A_{\mathbb{C},1}^n$, where $A_{\mathbb{C},1}^n$ is the corresponding affine space of~$R_1$. Since the Jacobian of the set~$\{f_1,f_2,\ldots,f_n\}$ with respect to the variables~$\{\bar{u}_1,\bar{u}_2,\ldots, \bar{u}_m, \bar{v}_1, \ldots,  \bar{v}_{n-m}\}$ is nonsingular generically for all channel realizations, it follows from the Implicit Function Theorem that the dimension of the image of $a_\phi$ is~$n$. 
Note that the image of~$a_{\phi}$ is a Zariski constructible subset of~$A_{\mathbb{C},1}^n$ (see Chevalley Theorem \cite{Matsumura}, Ch.\ 2, 6.E.) and it has full dimension. Hence, the image contains a Zariski open subset of~$A_{\mathbb{C},1}^n$ (see the discussion in section~\ref{sec:preliminary}). Let~$\mathcal{U}$ be that Zariski open subset of~$A_{\mathbb{C},1}^n$ in the image. Since~$\mathcal{U}$ is in the image of the map~$a_\phi$, there exists a solution for interference alignment equations for any choice of~$\{\bbH^{(1)}_{kj}\}_{k\neq j}$ in~$\mathcal{U}$,
which implies that interference alignment is feasible for generic choice of $\{\bbH^{(1)}_{kj}\}_{k\neq j}$.
Since the choice of channel matrices $\{\bbH^{(2)}_{kj},\bbH^{(3)}_{kj},\bbH^{(4)}_{kj}\}_{k\neq j}$ is also generic, this completes the proof of the ``if" direction.
\end{proof}
\bigskip

Notice that the condition \eqref{assumption} is equivalent to the properness of the polynomial system \eqref{eq2} defining interference alignment. For symmetric systems with $M_k=M$, $N_k=N$ for all $k$, this condition simplifies to $M+N\ge d(K+1)$ (see \cite[Theorem 1]{Jafarbound}). Thus, each user can achieve $d$ degrees of freedom as long as $M+N\ge d(K+1)$ and that $d$ divides both $M$ and $N$. {\color{black} In a concurrent work, the authors of~\cite{Berkeley-IA} obtained a similar result under a different set of assumptions. More specifically, they considered the symmetric case with~$M_k = N_k =M,\; d_k = d$ for all $k$, and proved that the feasibility of interference alignment in this case is equivalent to $2M\ge d(K+1)$. This result and Theorem~\ref{thm:achievability} are complementary to each other. In particular, Theorem~\ref{thm:achievability} is applicable to non-symmetric systems, but does require an extra condition about the divisibility of the number of antennas by the number of data streams. When $K$ is odd and $(K+1)d = 2M$, then $M$ must be divisible by~$d$. This case is then covered by both Theorem~\ref{thm:achievability} and the result in \cite{Berkeley-IA}. However, for the case where $K$ is even and $(K+1)d \leq 2M$, Theorem~\ref{thm:achievability} is no longer applicable, whereas \cite{Berkeley-IA} shows that the interference alignment is achievable.}

A few other remarks are in order.
\begin{enumerate}
\item
Reference \cite{Jafarbound} also considered the case $d_k=1$ and used the Bernshtein's theorem to numerically compute the number of solutions, and therefore prove the feasibility, for the resulting polynomial system \eqref{IA1}-\eqref{IA2} when the number of antennas are small. In contrast, Theorem~\ref{thm:achievability} shows the feasibility of single beam interference alignment for all values of $M_k$, $N_k$ as long as the system is proper.

\item 
As shown in Theorem~\ref{thm:bd}, the condition \eqref{I2} is necessary. For example, the system~$K=2, M=N=3, d=2$ satisfies {\color{black} the inequality \eqref{IAbound}. However, the system of equations~\eqref{IA1}-\eqref{IA2} is infeasible for generic choice of channel coefficients. This further shows that the \textit{properness} property in~\cite{Jafarbound} does not imply feasibility in general, a fact that was first pointed out in \cite[example 17]{Jafarbound}}.
\item Theorem~\ref{thm:achievability} does not contradict the NP-hardness result of \cite{IANPhard}. Given a set of channel matrices, checking the feasibility of the interference alignment conditions \eqref{IA1}-\eqref{IA2}  when $M_k\ge3$ and $N_k\ge3$,  is NP-hard. It is true that, under the setting of Theorem~\ref{thm:achievability}, the interference alignment fails only for a measure zero set of channels. However, for systems not satisfying the conditions of Theorem~\ref{thm:achievability}, checking the feasibility of interference alignment can be hard. Moreover, the results of \cite{IANPhard} imply that, even if a given tuple of DoF is known to be achievable via interference alignment, finding the actual linear transmit/receive beamformers to achieve it is still a NP-hard problem when the number of users is large.
\item The condition~\eqref{assumption} implies the condition~\eqref{I2} {\color{black} if the number of antennas at each transceiver is divisible by~$d$}. In fact, by choosing $\mathcal{I} = \{(k,j)\}$, condition~\eqref{assumption} implies that $d \leq M_k +N_j -2d$ and hence the condition~$\eqref{I2}$ is satisfied.
{\color{black}
\item Theorem~\ref{thm:achievability} assumes that both $M_k$ and $N_k$ are divisible by $d$. This condition can be weakened for a symmetric system where $M_k = M,\; N_k = N,\; d_k =d$, for all $k$. In particular, assume that only $M$ (not $N$) is divisible by~$d$ and $M,N \geq d$. If the properness condition $(K+1)d \leq M+N$ holds, then we can construct a reduced MIMO interference channel with~$N' = M - d(K+1)$ receive antennas for each user, where $M+N' = d(K+1)$ and $M, N'$ are divisible by~$d$. By Theorem~\ref{thm:achievability}, the interference alignment condition for the reduced interference channel is feasible and therefore, so is the interference alignment condition for the original channel since the latter has more antennas. This shows that if $M$ is divisible by~$d$ and $M,N \geq d$, then the interference alignment system \eqref{IA1}--\eqref{IA2} is feasible for generic choice of channel coefficients if and only if~$(K+1)d \leq M+N$. By symmetry, the same conclusion holds for the case where $N$ is divisible by~$d$.
}
\end{enumerate}

\section{Simulation Results}

{\color{black}In this section, we use the theoretical DoF upper bounds to benchmark an existing algorithm for sum-rate maximization. We generate MIMO interference channels using the standard Rayleigh fading model. The numerical experiments are averaged over 100 Monte Carlo runs.}

%
%

We consider a MIMO interference channel where each transmitter/receiver is equipped with~$3$ antennas. For different number of users in the system, we maximize the sum-rate using the WMMSE algorithm \cite{WMMSE} at increasingly high SNRs. We estimate the slope of the sum-rate versus SNR and use it to approximate the achievable total DoF. We then compare it with the value of theoretical upper bound given by the conditions in Theorem~\ref{thm:bd}. The maximum gap of the two curves is one, but it is not clear if the gap is due to the weakness of the WMMSE algorithm or the DoF upper bound.
\begin{figure}[htbp]
\centering
\includegraphics[width=3.5in]{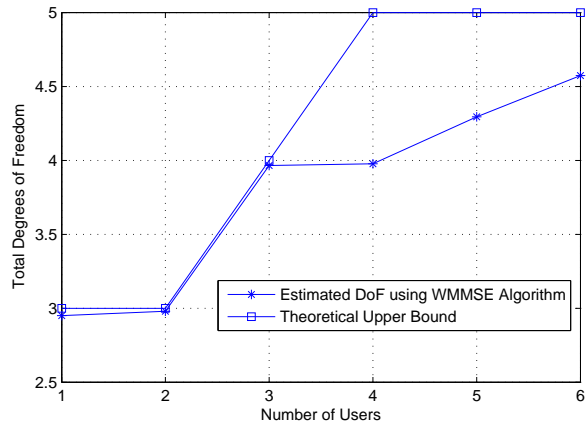}
\caption{Achievable DoF and theoretical upper bound}
\label{SlopeK2}
\end{figure}

\bigskip
\noindent
\textbf{Acknowledgement:} We are grateful to the authors of \cite{Berkeley-IA} for sharing their concurrent work.

\end{document}